\title{Addressing Cold-Start Problem in Click-Through Rate Prediction via 
  Supervised Diffusion Modeling}
\author{
  Wenqiao Zhu,
  Lulu Wang,
  Jun Wu
}
\newtheorem{lemma}{Lemma}
\begin{document}

\maketitle

\begin{abstract}
  Predicting Click-Through Rates is a crucial function within recommendation and advertising platforms, as the output of CTR prediction determines the order of items shown to users. The Embedding \& MLP paradigm has become a standard approach for industrial recommendation systems and has been widely deployed. However, this paradigm suffers from cold-start problems, where there is either no or only limited user action data available, leading to poorly learned ID embeddings. The cold-start problem hampers the performance of new items. To address this problem, we designed a novel diffusion model to generate a warmed-up embedding for new items.
Specifically, we define a novel diffusion process between the ID embedding space and the side information space.
In addition, we can derive a sub-sequence from the diffusion steps to expedite training, given that our diffusion model is non-Markovian.
Our diffusion model is supervised by both the variational inference and binary cross-entropy objectives, enabling it to generate warmed-up embeddings for items in both the cold-start and warm-up phases.
Additionally, we have conducted extensive experiments on three recommendation datasets. The results confirmed the effectiveness of our approach.  
\end{abstract}

\begin{links}
     \link{Code}{https://github.com/WNQzhu/CSDM}
\end{links}

\section{Introduction}
Recommendation systems are crucial components of numerous commercial platforms, addressing the challenge of information overload prevalent in the digital age. A primary goal of many such systems often involves predicting Click-Through Rates (CTR).
Embedding \& MLP (Multilayer Perceptron) methods \cite{10.5555/3172077.3172127, 10.1145/3447548.3467077,10.1609/aaai.v33i01.33015941} have been widely utilized for this purpose. However, the Embedding \& MLP approach faces the cold-start problem, which arises from the long-tail distribution of candidate items and the dynamic nature of real-world recommendation systems.
New items in recommendation systems often have no or very limited user interactions.
 Consequently, the embeddings for these items are not adequately learned, leading to sub-optimal performance in predicting their CTRs.
 Furthermore, the embedding layer makes up a substantial part of the model's parameters and determines the input for the feature interaction and MLP modules.
Thus, it is imperative to improve the embedding layer to mitigate the cold-start problem in CTR prediction tasks.

To address the challenge of the cold-start problem, several approaches have been proposed that leverage either the limited available samples or the side information associated with new items. Approaches that make use of the limited available samples often employ a meta-learning framework, such as MAML \cite{Finn2017ModelAgnosticMF}, to derive robust representations for new items through an optimized training procedure \cite{10.1145/3331184.3331268,Lee2019MeLUMU,Lu2020MetalearningOH,ouyang2021learning,10.1145/3404835.3462843}. On the other hand, methods that leverage side information generally learn a transformation from user/item attributes to robust embeddings \cite{10.5555/2832747.2832769,10.1145/2959100.2959172,Saveski2014ItemCR,10.1145/564376.564421,10.1145/1995966.1995976,10.1145/3474085.3475665,10.1145/3397271.3401178}.

The aforementioned approaches learn representations as fixed points in the embedding space. However, due to the limited data available in cold-start scenarios, it is highly challenging to learn a reliable representation \cite{9010263}. To achieve reliable embedding learning for cold-start items, some variational approaches have been proposed \cite{10.1145/3485447.3512048,zhao2022improving}.
These approaches treat embedding learning as a distribution estimation problem and have shown effectiveness in tackling the cold-start problem. However, these approaches struggle with the trade-off between traceability and flexibility \cite{10.5555/3157382.3157627,10.5555/3045118.3045358,wang2023diffrec}, and they also suffer from the model collapse problem \cite{10.5555/3524938.3525758}.


To address the limitations of existing methods and further improve the performance of \textbf{c}old-\textbf{s}tart in CTR prediction, we propose a \textbf{d}iffusion \textbf{m}odel named CSDM, which constructs the transition between embeddings and side information in a denoising manner.
While diffusion models \cite{ho2020denoising,10.5555/3045118.3045358} have achieved remarkable results in image synthesis tasks \cite{teng2023relay} and can address the limitations in variational auto-encoder methods, it is not straightforward to adopt diffusion models to tackle the cold-start problem directly due to the following reasons:
(1) Standard diffusion models construct transitions between the data distribution and standard Gaussian distribution; in the context of the cold-start problem, we need to construct transitions between embeddings and side information.
(2) Diffusion models require more training and inference time due to the lengthy denoising steps.

We propose a novel diffusion model to overcome the aforementioned obstacles in addressing the cold-start problem.
Specifically, in addition to gradually adding noise during the forward process of diffusion steps, we also progressively incorporate portions of side information. This approach enables us to construct a transition between ID embeddings and side information. We consider our model as a non-Markovian model, which permits us to extract a sub-sequence during the generation process to expedite the training speed. We update the original embeddings with the newly generated ones, ensuring no additional inference cost is incurred during the inference phase.
We perform extensive experiments on three CTR prediction benchmark datasets to validate the effectiveness of our proposed method.

In a nutshell, the contributions of this work include:
\begin{itemize}
\item We propose a diffusion model to address the cold-start problem in CTR prediction tasks while considering both training and inference costs. To the best of our knowledge, we are the first to employ diffusion models for cold-start problems in CTR predictions.
\item We design a new diffusion process that allows us to construct transitions between ID embeddings and side information. Furthermore, our model is non-Markovian, which enables us to extract sub-sequences to reduce training costs.
\item Extensive experiments are conducted on three public benchmark datasets, and the results show that CSDM outperforms existing cold-start methods in CTR predictions.
\end{itemize}

\section{Preliminary: Diffusion Models}
Diffusion models represent a class of generative models that
leverage the diffusion process to remove noise from latent samples incrementally,
resulting in the generation of new samples.
DDPM \cite{ho2020denoising} is one of the
most representative diffusion models, comprising both a forward process and a reverse process.

\textbf{Forward process} gradually adds noises
to the given original data $\mathbf{z}_0 \in R^d$ over a sequence of  $T$ steps,
creating a Markov chain $\mathbf{z}_0, \mathbf{z}_1, \cdots, \mathbf{z}_T$.
In this chain, $\mathbf{z}_T$ is assumed to be an approximation of Gaussian noise.
Each forward step is defined as:
\[
q(\mathbf{z}_t|\mathbf{z}_{t-1}) := \mathcal{N}\left(\mathbf{z}_t;
\sqrt{\frac{\alpha_t}{\alpha_{t-1}}} \mathbf{z}_{t-1}, \left(1 - \frac{\alpha_t}{\alpha_{t-1}}\right)\mathbf{I}\right).
\]
DDPM incorporates a pre-established, constant noise schedule $\alpha_{1:T} \in (0,1]^T$, which controls the quantity of noise introduced at each step.
It admits a closed form of $\mathbf{z}_t$ at any timestep $t$: $q(\mathbf{z}_t|\mathbf{z}_0) = \mathcal{N}(\mathbf{z}_t; \sqrt{\alpha_t} \mathbf{z}_0, (1 - \alpha_t)\mathbf{I})$.

\textbf{Reverse process} sequentially eliminates the noise from $\mathbf{z}_t$ to 
recover $\mathbf{z}_{t-1}$ in a recursive manner, continuing this process until it reaches the initial step 0.
A single transition step parameterized by
$p_\omega(\mathbf{z}_{t-1}|\mathbf{z}_{t}) := \mathcal{N}(\mathbf{z}_{t-1}; \mathbf{\mu}_\omega(\mathbf{z}_t, t),
\mathbf{\Sigma}_\omega(\mathbf{z}_t, t))$ is learned,
where $\mathbf{\mu}_\omega(\mathbf{z}_t, t)$ and $\mathbf{\Sigma}_\omega(\mathbf{z}_t, t)$
are learned mean and variance. A U-net \cite{Ronneberger2015UNetCN} architecture is employed to model these parameters.

\textbf{Optimization.} Given the definition of the forward and reverse process,
DDPM optimizes the Evidence Lower Bound (ELBO) objective function.
It calculates the KL-divergence between $p_\omega$ and $q$ plus an entropy term:
\begin{align}
  \mathcal{L}_{\text{diff}} &= \mathbb{E}_q \left[\underbrace{D_{KL}(q(\mathbf{z}_t|\mathbf{z}_0) \Vert p(\mathbf{z}_T))}_{\mathcal{L}_T}\right] \notag\\
  & +
  \mathbb{E}_q \left[\sum_{t > 1} \underbrace{D_{KL} (q(\mathbf{z}_{t-1}|\mathbf{z}_t, \mathbf{z}_0) \Vert
    p_\omega(\mathbf{z}_{t-1} | \mathbf{z}_t))}_{\mathcal{L}_{t-1}} \right] \notag\\
  & - \mathbb{E}_q \left[ \underbrace{\log p_\omega(\mathbf{z}_0 | \mathbf{z}_1)}_{\mathcal{L}_0}  \right]  \label{eq:opt}
\end{align}
An inherent limitation of DDPM is its dependence on a Markovian process, which leads to a computationally intensive reverse process. To address this limitation, DDIM \cite{song2020denoising} adopts a non-Markovian method, markedly accelerating the reverse process. The objective function of DDIM is equivalent to Equation (\ref{eq:opt}), up to a constant difference.
\section{Method}
\begin{figure*}[t]
\begin{center}
  \includegraphics[width=\linewidth]{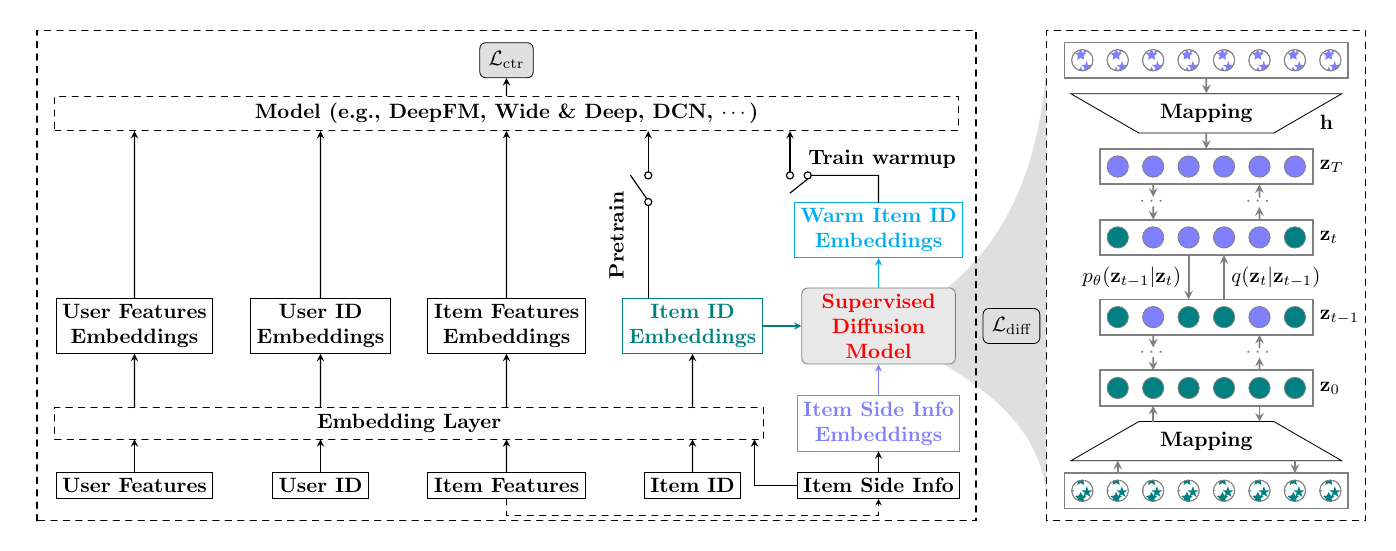}
\end{center}
\caption{The proposed CSDM framework for cold-start problems in CTR prediction.}
\label{fig:framework}
\end{figure*}
\subsection{Problem Definition}
The objective of Click-Through Rate (CTR) prediction is to forecast the likelihood
that a user will click on a specific presented item.
The outcome of this prediction will determine the final ordering of items shown to users.
A CTR prediction task is typically structured as a supervised
binary classification problem and is trained using an i.i.d (independent and identically distributed) dataset $\mathcal{D}$ from users' historical interactions.
Each instance $(\mathbf{x}, y) \in \mathcal{D}$ includes a collection of features $\mathbf{x}$ and a target label $y \in \{0, 1\}$, indicating the user's reaction to the presented item.
Generally, the input feature $\mathbf{x}$ can be decomposed into several components:
\begin{equation}
  \mathbf{x} = [u, \mathcal{X}_u, i, \mathcal{X}_i, \textit{context}]
\end{equation}
Here, $u$ is a unique identifier for each user within the recommendation system. Similarly, $i$ is a unique identifier for each item.
$\mathcal{X}_u$ represents the set of features associated with users, while $\mathcal{X}_i$ represents the set of features with items. The $\textit{context}$ refers to environmental features such as time and location.

The Embedding \& MLP paradigm first employs embedding technology
to convert the IDs and features into unique embeddings.
We designate the embeddings for the item ID, use ID, item features, user features, and context features as 
$\mathbf{e}_i \in R^d$, $\mathbf{e}_u \in R^d$, $\mathbf{e}_{\mathcal{X}_i} \in R^{d\times |\mathcal{X}_i|}$, $\mathbf{e}_{\mathcal{X}_u} \in R^{d\times |\mathcal{X}_u|}$, $\mathbf{e}_c \in R^{d\times |c|}$, respectively. Here, $d$ is the dimension of the embeddings.
The CTR model estimates the probability $\hat{y} = \textit{Pr}(y=1|\mathbf{x})$ by applying a
discriminative function $f(\cdot)$:
\begin{equation}
\hat{y} = f(\mathbf{e}_i, \mathbf{e}_u, \mathbf{e}_{\mathcal{X}_i}, \mathbf{e}_{\mathcal{X}_u}, \mathbf{e}_c;\theta, \phi)
\end{equation}
where $\theta$ denotes the parameters of the backbone deep model $f(\cdot)$ and
$\phi$ denotes the parameters of the embedding layers. The Binary Cross Entropy is often employed as the loss function for binary classification:
\begin{equation}
  \mathcal{L}_{ctr}(\theta,\phi) = -y \log \hat{y} - (1-y)\log(1-\hat{y})
  \label{eq:l_ctr}
\end{equation}
Since the parameters are trained by users' historical behavior data,
the embedding $\mathbf{e}_i$ for recently emerged items $(i, \mathcal{X}_i)$
with limited user interactions are not well learned.
As a result, the backbone model $f(\cdot)$ struggles to make an accurate 
estimation of the probability for these items.
This issue is known as the item cold-start problem, which includes two stages: (1) The cold-start phase, during which there are no user interactions with the item, and (2) the warm-up phase, where there are a limited number of user interactions. In this work, we tackle both of these stages, focusing exclusively on the item cold-start problem.
Specifically, a subset of item features $\mathcal{X}_v \subset \mathcal{X}_i$ are employed with our diffusion model to address the cold-start problem in CTR prediction.
\subsection{Supervised Diffusion Model}
To address the cold-start problem, we utilize a subset of item features $\mathcal{X}_v$, referred to as side information, to enrich the item ID embeddings. To achieve this, we employ a diffusion model to enable the flow of semantic information between the side information embeddings and the ID embeddings. However, since the standard diffusion process transforms an embedding into Gaussian noise, it is not directly applicable in this context. Therefore, we design a new diffusion process to learn warm-up ID embeddings for items, as illustrated in Figure \ref{fig:framework}.

We first pre-train a backbone model to provide the initial item ID embeddings. Then, we conduct the diffusion process to generate warmed-up embeddings.
Following \cite{10.5555/3600270.3600583,Cui2024DiffusionbasedCL}, we convert the discrete side information to a continuous space using an embedding map and project the initial item ID embeddings into hidden states. Let $\mathbf{h}$ and $\mathbf{z}_0$ denote the hidden state of the side information and the initial embedding, respectively. We gradually transform $\mathbf{z}_0$ into $\mathbf{h}$ using a forward diffusion process:
\begin{equation}
\mathbf{z}_0 \rightarrow \mathbf{z}_1 \rightarrow \cdots \rightarrow \mathbf{z}_T = \mathbf{h} + \epsilon, 
\end{equation}
where $\epsilon \sim \mathcal{N}(\mathbf{0}, \mathbf{I})$.
Since our diffusion process depends on both $\mathbf{z}_0$ and $\mathbf{h}$, it is no longer
Markovian. Inspired by DDIM \cite{song2020denoising}, we define a family $\mathcal{Q}$ of inference distributions, indexed by a real vector $\sigma \in \mathbb{R}^T_{\ge 0}$:
\begin{align}
  q_\sigma(\mathbf{z}_{1:T}|\mathbf{z}_0, \mathbf{h}) :=
  q_\sigma(\mathbf{z}_{T} | \mathbf{z}_0, \mathbf{h})
  \prod_{t=2}^{T} q_\sigma\left(\mathbf{z}_{t-1}|\mathbf{z}_t, \mathbf{h}, \mathbf{z}_0\right)
  \notag
\end{align}
where
\begin{equation}
  q_\sigma(\mathbf{z}_{T}| \mathbf{z}_0, \mathbf{h}) =
  \mathcal{N}\left(\sqrt{\alpha_{T}}\mathbf{z}_0 + \sqrt{c_{T}}\mathbf{h}, (1-\alpha_{T})\mathbf{I}\right)
  \label{eq:z_T}
\end{equation}
The forward process is governed by a decreasing sequence $\alpha_t \in (0, 1]^T$ and an increasing sequence $ c_t \in (0, 1]^T$. In ideal case, when $\alpha_t \rightarrow 0$ and
    $c_t \rightarrow 1$, we have $\mathbf{z}_T \sim \mathcal{N}(\mathbf{h}, \mathbf{I})$. 
In the recommendation scenario, since \(\mathbf{z}_t\) contains collaborative filtering information and \(\mathbf{h}\) contains feature information, it is not necessary to fully zero out \(\mathbf{z}_0\) or \(\mathbf{h}\).       
    Furthermore, we choose the  mean function as
\[
  q_\sigma(\mathbf{z}_{t-1}|\mathbf{z}_t,\mathbf{z}_0, \mathbf{h}) =
  \mathcal{N}(\mathbf{z}_{t-1}|\kappa_t \mathbf{z}_t + \lambda_t \mathbf{z}_0 + \nu_t \mathbf{h}, \sigma_t^2 \mathbf{I})
\]
in order to guarantee that the following equation 
\begin{equation}
  q_\sigma(\mathbf{z}_{t}| \mathbf{z}_0, \mathbf{h}) =
  \mathcal{N}(\sqrt{\alpha_{t}}\mathbf{z}_0 + \sqrt{c_{t}}\mathbf{h}, (1-\alpha_{t})\mathbf{I})
  \label{eq:z_t}
\end{equation}
holds true for all $t$.
The parameters are set as:
\begin{align}
  &\kappa_t = \sqrt{\frac{1-\alpha_{t-1} - \sigma^2}{1-\alpha_t}} \notag \\
  &\lambda_t = \sqrt{\alpha_{t-1}} - \sqrt{\alpha_t} \sqrt{\frac{1-\alpha_{t-1} - \sigma^2}{1-\alpha_t}} \notag \\
  &\nu_t = \sqrt{c_{t-1}} - \sqrt{c_t} \sqrt{\frac{1-\alpha_{t-1} - \sigma^2}{1-\alpha_t}} \notag
\end{align}
This posterior function can be derived from Bayes' rule, we show the proof in the supplementary materials.

In the reverse process, given $\mathbf{z}_t$, we can predict the denoised observation of the hidden state of ID embeddings:
\begin{equation}
  g_\omega^{(t)}(\mathbf{z}_t) := (\mathbf{z}_t - \sqrt{c_t}\mathbf{h}-\sqrt{1 - \alpha_t} \epsilon_\omega^t(\mathbf{z}_t))/\sqrt{\alpha_t}
\end{equation}
where $\{\epsilon_\omega^{(t)}\}_{t=1}^T$ is a set of $T$ functions to predict noise from $\mathbf{z}_t$ and $\omega$ contains the learnable parameters of diffusion model.
Therefore, we can define the reserve step as:
\begin{equation}
  p_\omega^{(t)}(\mathbf{z}_{t-1}|\mathbf{z}_t) =
  \left\{
  \begin{array}{lr}
    g_\omega^{(1)}(\mathbf{z}_t) & \text{if $t = 1$};  \\
    q_\sigma(\mathbf{z}_{t-1}|\mathbf{z}_t, g_\omega^{(t)}(\mathbf{z}_t), \mathbf{h})   & \text{otherwise}
  \end{array}
  \right.
  \notag
\end{equation}
\textbf{Optimization.} The parameters of diffusion models are optimized by a combination of the $\mathcal{L}_{\text{ctr}}$ and
the variational inference objective $\mathcal{L}_{\text{diff}}$:
\begin{equation}
  \mathcal{L} = \mathcal{L}_\text{ctr} + \rho \mathcal{L}_\text{diff},
  \label{eq:loss}
\end{equation}
where $\rho$ is a hyper-parameter.
By combining $\mathcal{L}_\text{ctr}$ and $\mathcal{L}_\text{diff}$, the generative process of the diffusion model takes into account both the collaborative filtering information derived from user action data and the side information obtained from item features. Consequently, our method is applicable in both the cold-start and warm-up stages of the CTR prediction task.
In our implementation, we utilize the simplified version of $\mathcal{L}_\text{diff}$, as proposed in DDPM \cite{ho2020denoising}.
\begin{table*}[!t]
  \begin{center}
    \begin{tabular}{cc||lr||lr||lr||lr}
      \toprule
      &\multirow{2}{*}{Methods} &  \multicolumn{2}{c||}{Cold} & \multicolumn{2}{c||}{Warm-a} & \multicolumn{2}{c||}{Warm-b} & \multicolumn{2}{c}{Warm-c} \\
      && AUC & RelaImpr & AUC & RelaImpr & AUC & RelaImpr & AUC & RelaImpr \\
      \cmidrule{2-10}
      \multirow{7}{*}{\rotatebox[origin=c]{90}{\centering MovieLens-1M}}&
      DeepFM        & 0.7313  & 0.00\%  & 0.7464 & 0.00\% & 0.7588 & 0.00\% & 0.7692 & 0.00\%\\  
      &DropoutNet(DeepFM)   & \underline{0.7410}  & \underline{4.19}\%       & 0.7506 & 1.70\%      & 0.7593 & 0.19\%      & 0.7671 &  -0.78\%\\
      &MWUF(DeepFM) & 0.7324  &  0.47\%      & 0.7466 & 0.08\%      & 0.7590 &  0.07\%     & 0.7694 & 0.07\%\\
    &Meta-E(DeepFM) &0.7397  &  3.63\%      & 0.7513 &  1.98\%     & 0.7614 & 1.00\%      & 0.7690 & -0.07\%\\
      &VELF(DeepFM) & 0.7244  & -2.98\%       & 0.7695 & 9.37\%      & 0.7382 & -7.95\%      & 0.7741 & 1.82\%\\
      &CVAR(DeepFM) & 0.7349  & 1.55\%       & \underline{0.7910} & 18.10\%      & \underline{0.8009} &  \underline{16.27}\%     & \underline{0.8044} & \underline{13.07}\%\\
      &CSDM(DeepFM) & \textbf{0.7443}  &  \textbf{5.62}\%      & \textbf{0.7982} & \textbf{21.02}\%
                    & \textbf{0.8058}  &  \textbf{18.16}\%      & \textbf{0.8089} & \textbf{14.74}\%\\
      \midrule
      &\multirow{2}{*}{Methods} &  \multicolumn{2}{c||}{Cold} & \multicolumn{2}{c||}{Warm-a} & \multicolumn{2}{c||}{Warm-b} & \multicolumn{2}{c}{Warm-c} \\
      && AUC & RelaImpr & AUC & RelaImpr & AUC & RelaImpr & AUC & RelaImpr \\
      \cmidrule{2-10}
      \multirow{7}{*}{\rotatebox[origin=c]{90}{\centering Taobao AD}}&
      DeepFM        & 0.5958  & 0.00\%  & 0.6089 & 0.00\%  & 0.6204 & 0.00\% & 0.6306  & 0.00\%\\  
      &DropoutNet(DeepFM)   & 0.5970  &  1.25\%      & 0.6097 & 0.73\%       & 0.6207  & 0.25\%      & 0.6305 & -0.7\%\\
      &MWUF(DeepFM) & 0.5967   &  0.94\%     & 0.6101 & 1.10\%      &  0.6207 & 0.25\%      &  0.6303& -0.23\%\\
      &Meta-E(DeepFM) & 0.5975   & 1.77\%       & 0.6119 & 2.75\%      & 0.6226  &  1.83\%     & 0.6323 & 1.30\%\\
      &VELF(DeepFM) & 0.5967   &  0.93\%      & 0.6176 & 7.98\%      & 0.6258  & 4.49\%      & 0.6335 & 2.22\%\\
      &CVAR(DeepFM) & \underline{0.5998}&  \underline{4.17\%}      & \underline{0.6194} &  \underline{9.64\%}     & \underline{0.6295}  & \underline{7.56\%}     & \underline{0.6370} & \underline{4.90\%}\\
      &CSDM(DeepFM) & \textbf{0.6004}   & \textbf{4.80\%}       & \textbf{0.6290} & \textbf{18.45\%}      & \textbf{0.6324}  & \textbf{9.97\%}      & \textbf{0.6382} & \textbf{5.82\%}\\
      \midrule
      &\multirow{2}{*}{Methods} &  \multicolumn{2}{c||}{Cold} & \multicolumn{2}{c||}{Warm-a} & \multicolumn{2}{c||}{Warm-b} & \multicolumn{2}{c}{Warm-c} \\
      && AUC & RelaImpr & AUC & RelaImpr & AUC & RelaImpr & AUC & RelaImpr \\
      \cmidrule{2-10}
      \multirow{7}{*}{\rotatebox[origin=c]{90}{\centering CIKM 2019}}&
      DeepFM        & 0.7376  & 0.00\%  & 0.7522 & 0.00\%  & 0.7605 & 0.00\% & 0.7671  & 0.00\%\\  
      &DropoutNet(DeepFM)   & 0.7367  & -0.38\%       & 0.7487 &  -1.39\%      & 0.7569  & -1.38\%      & 0.7636 & -1.31\%\\
      &MWUF(DeepFM) & 0.7372   & -0.17\% &0.7501  & -0.83\%      & 0.7598  & -0.27\%      & 0.7674 & 0.11\%\\
      &Meta-E(DeepFM) & 0.7367 & -0.38\%      &0.7483  & -1.55\%      & 0.7574  & -1.19\%      & 0.7651 & -0.75\%\\
      &VELF(DeepFM) & 0.7403   & 1.13\%      & 0.7393 &  -5.11\%     & 0.7390  &  -8.25\%     & 0.7317 & -13.25\%\\
      &CVAR(DeepFM) & \underline{0.7405}   & \underline{1.22\%}     & \underline{0.7588} & \underline{2.62\%}      & \underline{0.7649}  &  \underline{1.69\%}     & \underline{0.7687} & \underline{0.60\%}\\
      &CSDM(DeepFM) & \textbf{0.7418}  & \textbf{1.77\%}       & \textbf{0.7624} & \textbf{4.04\%}      & \textbf{0.7686}  &  \textbf{3.10\%}    & \textbf{0.7710} & \textbf{1.46\%} \\
      \bottomrule
    \end{tabular}
  \end{center}
  \caption{Model comparison on three datasets. DeepFM is utilized as the backbone. Ten runs are conducted for each method. The best and second-best improvements are highlighted in bold and underlined, respectively.}
  \label{tb:main-res}
\end{table*}
\begin{figure*}[!t]
\begin{center}
  \includegraphics[width=\linewidth]{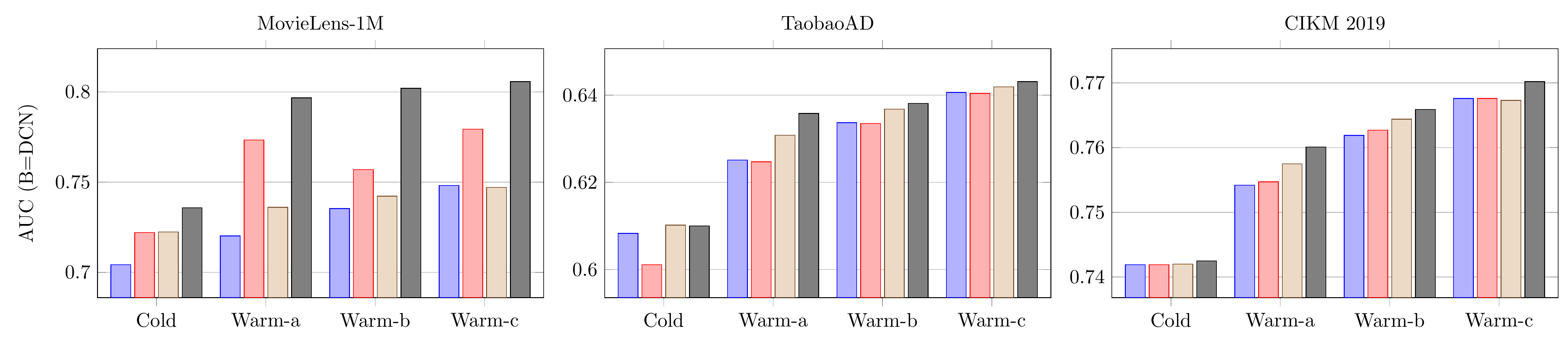}
\end{center}
\begin{center}
  \includegraphics[width=\linewidth]{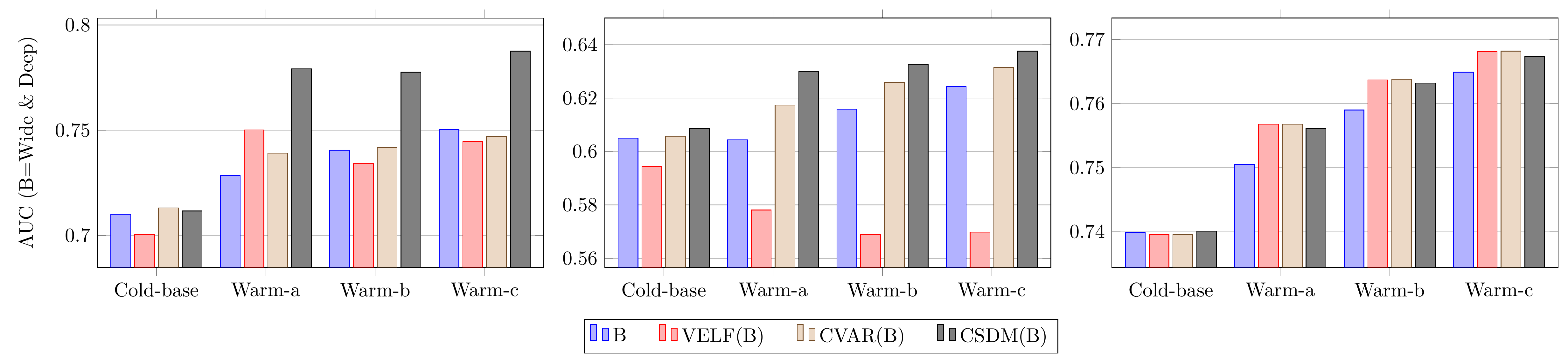}
\end{center}

\caption{AUC scores evaluated across various stages for different backbone models, conducted over three datasets with 10 runs per model.}
\label{fig:various-backbones}
\end{figure*}

\begin{figure}[!t]
\begin{center}
  \includegraphics[width=\linewidth]{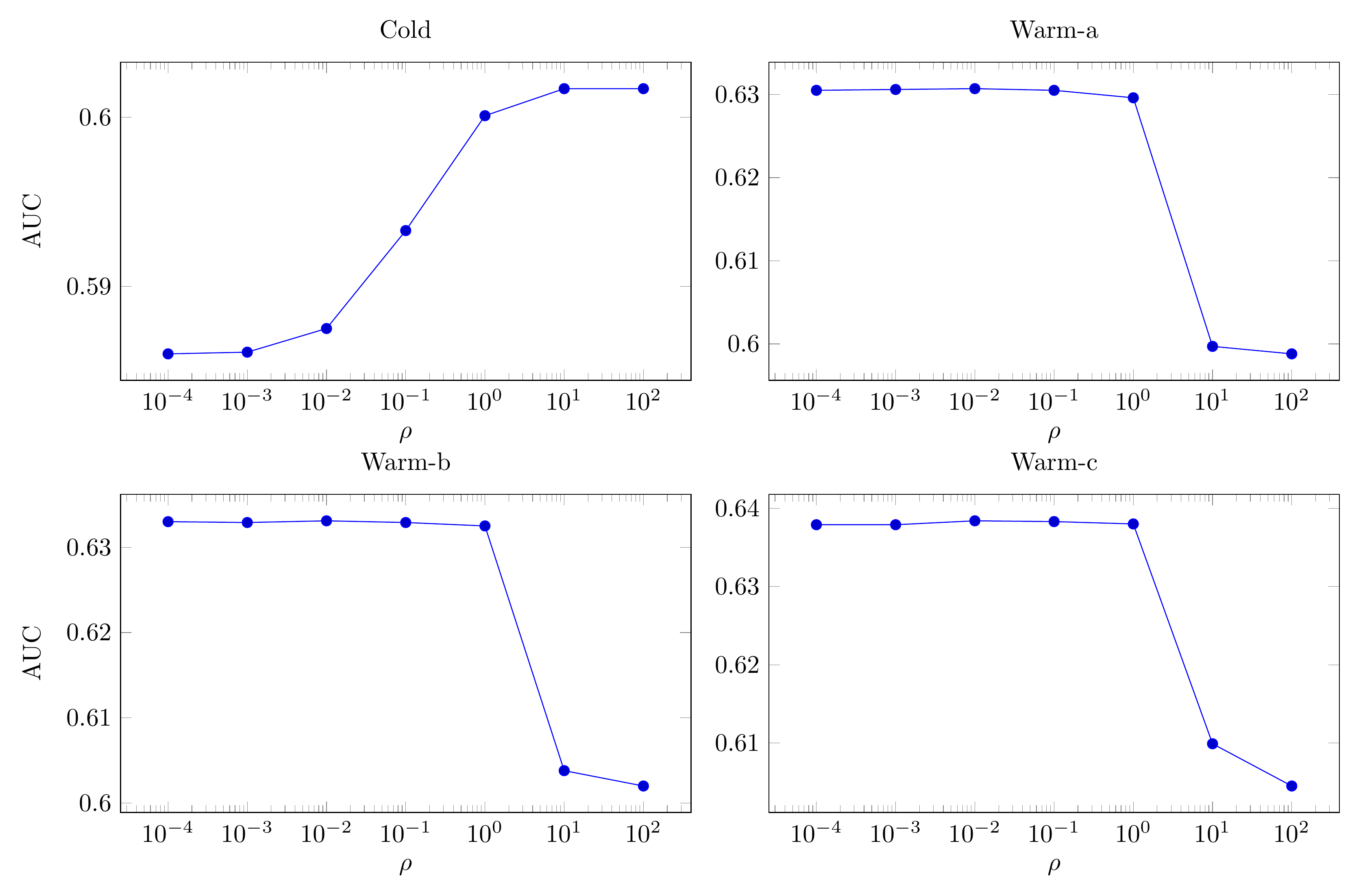}
\end{center}

\caption{Performance evaluation on the TaobaoAD dataset using DeepFM as the backbone model across a range of $\rho$ values, mean of three runs.}
\label{fig:lambda}
\end{figure}
\textbf{Generating warmed-up embeddings.}
Given the reverse step provided above, we can generate the warmed-up embeddings from $\mathbf{h}$ by repeating the following step:
\begin{align}
\mathbf{z}_{t-1} &=\sqrt{\alpha_{t-1}}\underbrace{\left(\frac{\mathbf{z}_t - \sqrt{c_t}\mathbf{h} - \sqrt{1-\alpha_t} \epsilon^{(t)}_\omega (\mathbf{z}_t)}{\sqrt{\alpha_t}}
  \right)}_{\text{predicted hidden state of initial ID embedding}} \notag \\
&+\underbrace{\sqrt{1 - \alpha_{t-1} - \sigma_t^2} \epsilon^{(t)}_\omega(\mathbf{z}_t)}_{\text{direction pointing to $\mathbf{z}_t$}} \notag \\
&+\sqrt{c_{t-1}}\underbrace{\mathbf{h}}_{\text{side information}} + \underbrace{\sigma_t \epsilon_t}_{\text{random noise}}
\label{eq:gen}
\end{align}
From Equation (\ref{eq:gen}), we can observe that the generating process consists of two key components: (1) predicting the relevant ID embedding, and (2) leveraging side information to improve the generation quality. In this process, we set $\sigma_t = 0$ for all $t$ to ensure that the reverse process is deterministic.
Upon obtaining the estimated warmed-up embedding $\tilde{\mathbf{z}}_0$, we project it back into the ID embedding space via a linear function. Following the approach taken by CVAR \cite{zhao2022improving}, we incorporate the frequency of items into this linear function. The original ID embeddings are subsequently replaced with the newly generated ones. As a result, our method does not modify the backbone structures and incurs no extra computational overhead during the inference phase. 
\\\textbf{Sub-sequence.}
Our approach is non-Markovian. Therefore, we can consider a sub-sequence of the latent variables $\mathbf{z}_{1:T}$ to accelerate the generative process. We uniformly sample a sub-sequence from $\mathbf{z}_{1:T}$ using a step parameter $s$. A larger $s$ results in a smaller sub-sequence, which in turn enables a faster generative process.

\section{Experiment}
\subsection{Dataset}
We evaluate our method on three publicly available datasets: \textbf{MovieLens-1M
  \footnote{http://www.grouplens.org/datasets/movielens/}}, \textbf{Taobao Display Ad Click
  \footnote{https://tianchi.aliyun.com/dataset/dataDetail?dataId=56}}, and \textbf{CIKM 2019 EComm AI
  \footnote{https://tianchi.aliyun.com/competition/entrance/231721}}.
The details of these datasets are described in the supplementary material.
\subsection{Baselines}
We compare our method with two groups of Click-Through Rate (CTR) prediction methods. The first group encompasses a variety of common feature-crossing techniques tailored specifically for CTR prediction. Methods within this group also serve as the foundational model for numerous cold-start algorithms.
(1) DeepFM \cite{10.5555/3172077.3172127} is a method that combines low-order feature interactions through Factorization Machines (FM) \cite{Rendle2010FactorizationM} and high-order feature interactions through a deep neural network.
(2) Wide \& Deep \cite{10.1145/2988450.2988454} contains a linear model and a deep neural network to effectively handle both simple and complex relationships in the data.
(3) DCN \cite{10.1145/3124749.3124754} explicitly models the interactions between features using a deep network.

The other group comprises state-of-the-art methods aimed at addressing the cold-start problem in CTR prediction tasks. We compare our method with the following methods: DropoutNet \cite{10.5555/3295222.3295249}, MWUF \cite{10.1145/3404835.3462843}, Meta-E \cite{10.1145/3331184.3331268}, VELF \cite{10.1145/3485447.3512048}, and CVAR \cite{zhao2022improving}.

\subsection{Experimental Settings}
\textbf{Dataset splits.} We divided the datasets into several groups following \cite{10.1145/3404835.3462843} to assess the performance of our proposed method in both the cold-start and warm-up phases.
First, we divide the items into two groups based on their frequency using a threshold $N$.
Items with a frequency greater than $N$ are classified as old items, while those with a lower frequency are considered new items. The threshold $N$ is set to 200 for MovieLens-1M, 2000 for TaobaoAD, and 200 for CIKM 2019, ensuring the ratio of new items to old items is approximately 8:2, which mirrors a long-tail distribution as described in \cite{10.1145/3397271.3401043}.
We further divide the new item instances, sorted by timestamps, into four groups: warm-a, warm-b, warm-c, and a test set. The first $3\times K$ instances are distributed evenly among warm-a, warm-b, and warm-c, with the remainder allocated to the test set. The value of $K$ is set to 20 for MovieLens-1M, 500 for TaobaoAD, and 50 for CIKM 2019, respectively.
\\\textbf{Implementation Details.}
We apply consistent experimental settings across all methods for each dataset to ensure fair comparisons. The embedding size for all features is set to 16 for compared methods.
Additionally, the MLPs in the backbone models utilize two dense layers, each with 16 units.
We set the learning rate to 0.001 for all methods, and the mini-batch size is set to 2048 for MovieLens-1M and TaobaoAD, and 4096 for CIKM 2019. All methods are optimized using the Adam optimizer \cite{2015-kingma} on shuffled samples. We set \( T=100 \) for the total number of forward steps.
The parameters \(\rho\) and \( s \) are searched from the sets \(\{0.001, 0.01, 0.1, 1\}\) and \(\{5, 10\}\), respectively.
We define the sequences \(\{\alpha_t\}\) and \(\{c_t\}\) using a hyper-parameter \(\beta=10^{-5}\) for all experiments:
\begin{equation}
  \alpha_t = (1-\beta)^t
\end{equation}
\begin{equation}
  c_t = \left(\sum_{k=1}^t\sqrt{\frac{\alpha_t}{\alpha_k}}\right) / 
  \left(\sum_{k=1}^T\sqrt{\frac{\alpha_t}{\alpha_k}}\right)
\end{equation}
Clearly, \(\{\alpha_t\}\) is a decreasing sequence and \(\{c_t\}\) is an increasing sequence.
We implement the U-Net as a two-layer MLP module with position encoding \cite{10.5555/3295222.3295349} enabled.
Furthermore, To avoid over-fitting in the diffusion model, we adopt dropout with $p=0.5$ in our implementation in the sample process and U-Net. Specifically, we have 
\begin{equation}
  \mathbf{z}_t = \sqrt{\alpha_t} \mathbf{z}_0 + \sqrt{c_t} \text{drop}(\mathbf{h}, p=0.5)
  + \sqrt{1-\alpha_t}\epsilon
\end{equation}
in the forward process, where $\epsilon \in \mathcal{N}(\mathbf{0},\mathbf{I})$.
It's worth noting that the dropout utilized here differs from DropoutNet in that Dropout affects the backbone model, whereas ours impacts the diffusion model only.
\\\textbf{Evaluation metrics.}
We use the Area Under the Curve (AUC) \cite{Ling2003AUCAS} as the metric to evaluate performance. This is a widely used metric in both recommendation systems \cite{10.1145/3383313.3412236} and computational advertising \cite{10.1145/3219819.3219823,10.1145/3331184.3331268}. An AUC value of 0.5 corresponds to random guessing. \cite{10.5555/3044805.3044982} proposed a relative improvement (RelaImpr) metric to assess the performance improvement, which is calculated as follows:
\begin{equation}
\text{RelaImpr} = \left(\frac{\text{AUC(model)} - 0.5}{\text{AUC(baseline)} - 0.5} - 1\right) \times 100\%
\end{equation}
We use this metric to compare the relative improvement in performance across all methods.

\subsection{Experiment Results}
\textbf{Comparison with State-of-the-arts.}
Our approach refines the embeddings of cold item IDs through a diffusion process, all without altering the underlying model architecture. Consequently, we compare our CSDM method against a range of state-of-the-art cold-start solutions from an embedding learning perspective, encompassing DropoutNet, MWUF, Meta-E, VELF, and CVAR. Simultaneously, we select DeepFM, a well-known method for CTR prediction, as the base model. We assess the average outcomes across ten runs on three distinct datasets and present these results in Table \ref{tb:main-res}.

We can notice that CSDM outperforms other comparative baselines across all datasets. The results indicate that our supervised diffusion model is capable of learning high-quality initial embeddings. Apart from learning better initial embeddings, our approach also demonstrates enhanced utilization of user action data during the warm-up phase. This is corroborated by the noticeable performance gains observed in the warm-up stage, as shown in Table \ref{tb:main-res}.

In addition to the aforementioned observations, we have noticed a decline in the relative improvement as items reach maturity. This trend can be attributed to the growing influence of user action data as items garner more interactions. The progressive increase in AUC scores from Warm-a to Warm-b, and Warm-c stages underscores this trend. In later stages, the backbone model is also capable of learning more refined embeddings, given that these items have been interacted with by a larger user base.
\\\textbf{Generalization Experiments.}
Our approach refines the ID embeddings to tackle the cold-start challenge in CTR prediction, making it model-agnostic.
We demonstrate its versatility by performing experiments on a range of backbone models beyond DeepFM, including Wide \& Deep and DCN.
More experiments are provided in the supplementary material.
For each model variant, we execute 10 trials to obtain the average AUC across three datasets, with the results reported in Figure \ref{fig:various-backbones}.

Upon analyzing the results, we note the following: (1) Our method generally outperforms the baseline model, demonstrating its effectiveness. (2) Furthermore, our method typically exceeds VELF and CVAR in most cases, highlighting the advantage of the diffusion method over the variational approach for cold-start CTR prediction tasks.
\\\textbf{Ablation Study.}
\begin{table}[t]
  \setlength{\tabcolsep}{3pt}  
  \begin{center}
    \begin{tabular}{c|c| c|c|c|c}
      \toprule
      Dropout & Dataset & Cold & Warm-a & Warm-b & Warm-c \\
     \midrule      
     w  & ML-1M  & 0.7456 & 0.7980 & 0.8059 & \textbf{0.8091} \\
     w/o  & ML-1M & 0.7456 & 0.7980 & 0.8059 & 0.8090 \\
     \midrule
     w  & TaobaoAD & \textbf{0.6002} & 0.6296 & 0.6324 & \textbf{0.6383} \\
     w/o& TaobaoAD & 0.5859 & \textbf{0.6306} & \textbf{0.6329} & 0.6379 \\
     \midrule
     w  & CIKM     & \textbf{0.7418} & \textbf{0.7622} & \textbf{0.7687}& \textbf{0.7711} \\
     w/o& CIKM     & 0.7402 & 0.7587 & 0.7677 & 0.7705 \\
    \bottomrule
    \end{tabular}
  \end{center}
  \caption{An ablation test on the dropout function of diffusion models: "w" indicates that dropout is enabled, whereas "w/o" signifies that dropout is disabled. ML-1M stands for MovieLens-1M.}
  \label{tb:drop-deepfm}
\end{table}
We conduct ablation tests on our CSDM to determine the impact of its parameters on performance. First, we evaluate the performance of CSDM across different warm-up phases with varying $\rho$. The experiments are conducted over the TaobaoAD dataset using DeepFM as the backbone model. The results are presented in Figure \ref{fig:lambda}.

It is observable that the CSDM's performance in relation to $\rho$ varies between the cold phase and the warm-up phase. In the cold phase, an increase in $\rho$ generally leads to improved performance. Conversely, in the warm-up stage, an excessively large $\rho$
results in a notable decrease in performance.
The reason is that in the cold phase, the cold items have only the side information and
$\mathcal{L}_{ctr}$ only contributes to the hot items. Increasing $\rho$ can emphasize the contribution of the diffusion model in this setting.
Additionally, this finding demonstrates the effectiveness of the CSDM approach in the cold-start problem. Meanwhile, in the warm-up stage, new items also have some user interaction. Enlarging $\rho$ excessively can harm the contribution of user action data.

We conduct an ablation analysis on the diffusion model concerning the dropout mechanism across three datasets. The results are shown in Table \ref{tb:drop-deepfm}.
It can be observed that the effect of dropout varies across the datasets. On the MovieLens-1M dataset, the inclusion of dropout provides minimal improvement. In contrast, on the CIKM 2019 dataset, incorporating dropout into the diffusion process positively impacts and enhances the model's performance.
\\\textbf{Computational Overhead.}
\begin{figure}[t]
\begin{center}
  \includegraphics[width=\linewidth]{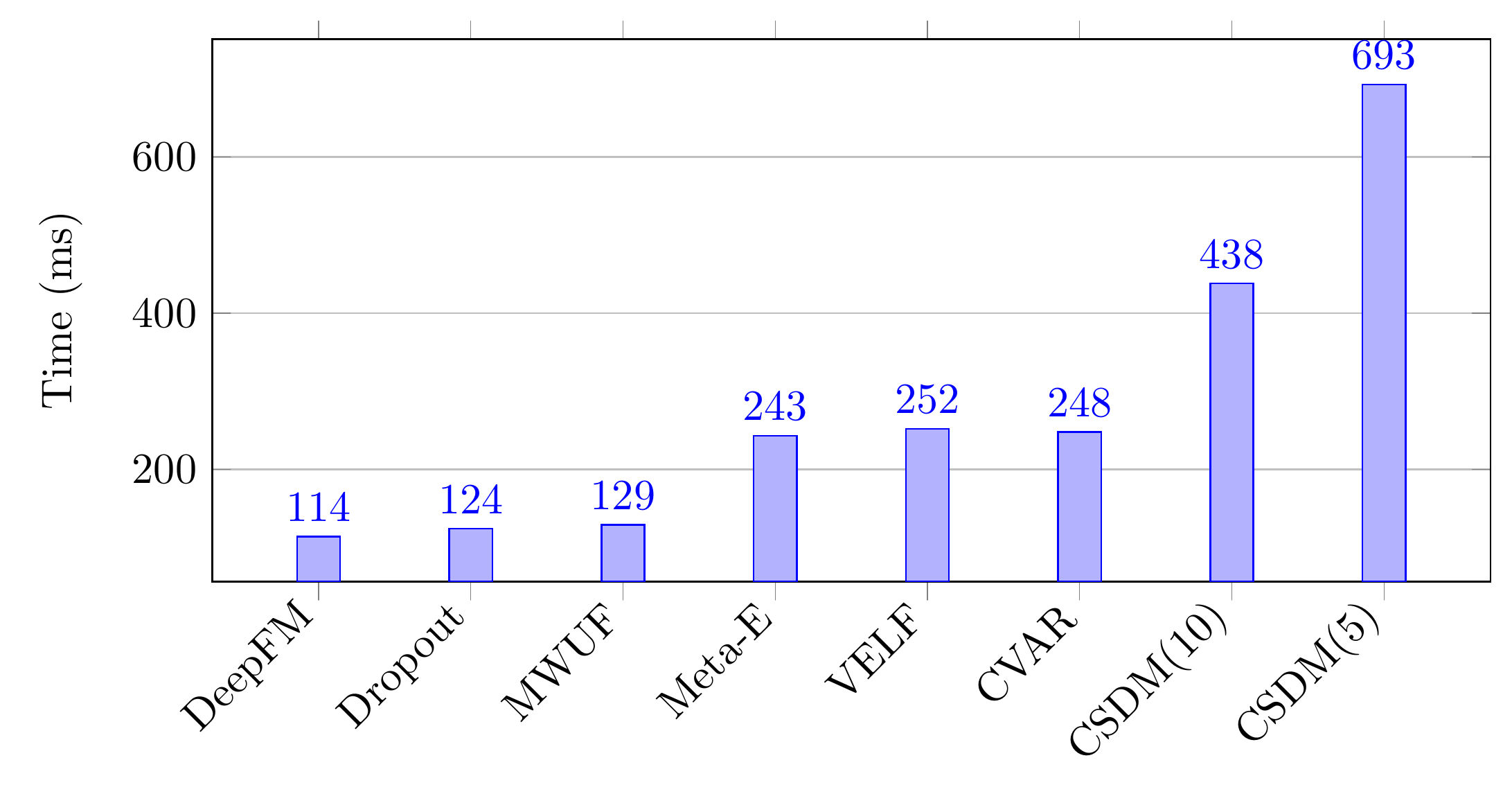}
\end{center}
\caption{The time cost for training one batch using various methods, with CSMD tested using both $s=5$ and $s=10$.}
\label{fig:overhead}
\end{figure}
We also compare the training overhead of various cold-start methods. Specifically, we measure the time cost for training one batch on the MovieLens-1M dataset and report the results in Figure \ref{fig:overhead}. The tests were conducted on a single A800 GPU with a batch size set to 2048. DeepFM is used as the backbone model.

Our method incurs higher training times due to the diffusion process. However, since our method only enhances the ID embeddings and writes the warm-up embeddings back to the original ID embeddings, there is no additional computational overhead during inference.



\section{Related Work}
\textbf{Cold-start Recommendation.}
Many approaches have been proposed to improve the recommendation of new users or items. Some of these are model-dependent, for example, Heater \cite{10.1145/3397271.3401178} and CLCRec \cite{10.1145/3474085.3475665} take CF-based models as their backbone. While some other methods are model-agnostic. 
For instance, DropoutNet \cite{10.5555/3295222.3295249} enhances the representation of users/items by applying dropout to exploit the average representations of users/items. MWUF \cite{10.1145/3404835.3462843} introduces a Meta Scaling and Shifting Network to enhance the cold ID embeddings. Meta-E \cite{10.1145/3331184.3331268} recasts the CTR prediction task as a Meta-learning \cite{Finn2017ModelAgnosticMF} problem and proposes a Meta-Embedding generator to initialize the cold ID embeddings.
VELF \cite{10.1145/3485447.3512048} and CVAR \cite{zhao2022improving} learn probabilistic embeddings to alleviate the cold-start problem in CTR prediction.
Similar to these methods, our method is also model-agnostic.
\\\textbf{Diffusion Model in Recommendation.}
Since the success of DDPM \cite{ho2020denoising} in image synthesis tasks, many researchers have attempted to leverage diffusion models in recommendation systems. DiffRec \cite{wang2023diffrec} employs diffusion models for generative recommendation. Diffusion models are also employed in \cite{Ziqiang_Cui_diff_rec} for sequential recommendation, where they are used for the semantic generation of augmented views for contrastive learning \cite{Oord2018RepresentationLW}. \cite{hanwen_du_diffrec} utilizes diffusion models to address the model collapse problems of variational auto-encoders in the sequential recommendation. \cite{10.1145/3581783.3612709} demonstrates that combining diffusion models with curriculum learning is beneficial for sequential recommendation.

Although some methods have explored the use of diffusion models for recommendations, to the best of our knowledge, employing diffusion models for cold-start in CTR prediction remains underdeveloped. This may be due to the challenges in cold-start scenarios, where we must construct a transition between ID embeddings and side information, whereas diffusion models are not directly applicable.

\section{Conclusion}
In this paper, we introduce a novel diffusion model to address the cold-start challenges in CTR prediction. It treats the embedding learning as a diffusion process. We design a non-Markovian diffusion process that enables the construction of an information flow between the side information of items and the pre-trained new item ID embeddings. Furthermore, our method can utilize both the collaborative filtering information from user action data and the side information in item features, making it applicable in both the cold-start and warm-up stages.
Experiments conducted across three distinct recommendation datasets demonstrate that the proposed method is effective in addressing the cold-start problem in CTR prediction.
\bibliography{aaai}
    \cleardoublepage
    \appendix

\section{Method}
We outline the main steps of our methodology in Algorithm \ref{algorithm}. First, a backbone model is pre-trained to provide the initial item embeddings. These pre-trained embeddings are then converted into $\mathbf{z}_0$ in the diffusion process. During the training of the diffusion model, the parameters of the backbone model are frozen. After completing the training of the diffusion model, we write the warmed-up embeddings back into the original embedding space, as shown in Figure \ref{fig:illustrate}.
\begin{algorithm}[t]
  \caption{Supervised diffusion model framework.}\label{algorithm}
  \renewcommand{\algorithmicrequire}{\textbf{Input:}}
  \renewcommand{\algorithmicensure}{\textbf{Output:}}
  \begin{algorithmic}[1]
    \REQUIRE $f_\theta$: A pretrained backbone model.
    \REQUIRE $\phi_\text{ID}^\text{new}$: Pretrained ID embeddings by the backbone.
    \REQUIRE $\mathcal{D}$: A recommendation dataset.

    \STATE Randomly initialize U-Net.
    \WHILE {not converge}
       \STATE Sample a batch sample $\mathcal{B}$ from $\mathcal{D}$.
       \STATE Extract cold ID embeddings $\mathbf{e}$ of items in $\mathcal{B}$.
       \STATE Update parameters of U-Net by optimizing $\mathcal{L}$.
    \ENDWHILE
    \WHILE {item id embedding is not replaced}
       \STATE Get the item id and related side information.
       \STATE Generate warm up embedding $\mathbf{w}$ by Equation (10). 
       \STATE Replace the item id embeddings by $\mathbf{w}$.
    \ENDWHILE
  \end{algorithmic}
\end{algorithm}
\begin{figure}[t]
\begin{center}
  \includegraphics[width=\linewidth]{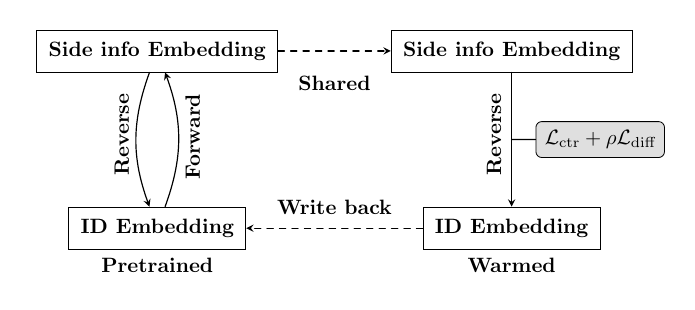}
\end{center}
\caption{An illustration of the process for generating the warm-up embeddings.}
\label{fig:illustrate}
\end{figure}
\section{Datasets}
We report the statistics of the datasets we used in our experiments in Table \ref{tb:stat}.
\\\textbf{MovieLens-1M
  \footnote{http://www.grouplens.org/datasets/movielens/}}:
It is one of the most well-known datasets for evaluating recommendation algorithms. This dataset comprises one million instances of movie ratings across thousands of movies and users. The movie features include movie ID, title, year of release, and genres, while the user features encompass age, gender, and occupation. We transfer ratings into binary (The ratings less than 4 are turned to 0 and the others are turned into 1).
\\\textbf{Taobao Display Ad Click
  \footnote{https://tianchi.aliyun.com/dataset/dataDetail?dataId=56}}: It contains 114000 randomly selected users from the Taobao website, covering 8 days of ad display and click log data (26 million records). The ad features include category ID, campaign ID, brand ID, advertiser ID, and price. User features consist of micro group ID, cms\_group\_id, gender, age, consumption level, shopping depth, occupation, and city level. The dataset includes a label of 1 for click behavior and 0 for non-click behavior.
\\\textbf{CIKM 2019 EComm AI
  \footnote{https://tianchi.aliyun.com/competition/entrance/231721}}:
It is an E-commerce dataset comprising 62 million instances. Each item is characterized by four categorical features: item ID, category ID, shop ID, and brand ID. User features encompass user ID, gender, age, and purchasing power. Each instance is tagged with a behavioral label('pv', 'buy', 'cart', 'fav'). We transform the label into a binary format (1 for a purchase action, 0 otherwise).

\section{More Experiments}
\begin{figure*}[!t]
\begin{center}
  \includegraphics[width=\linewidth]{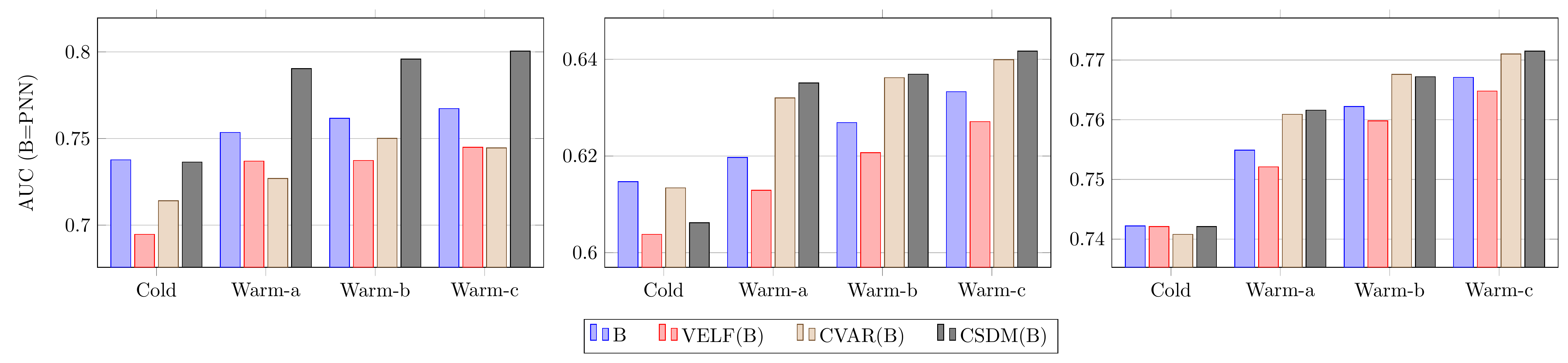}
\end{center}

\caption{AUC scores evaluated across various stages using PNN as backbone model, conducted over three datasets with 10 runs per model.}
\label{fig:pnn}
\end{figure*}
\textbf{Generalization Experiments.}
Beside applying DeepFM, Wide \& Deep, and DCN as backbone models, we also conduct experiments on PNN. PNN \cite{PNN7837964} is a CTR prediction model that employs a product layer to explore the interactions among inter-field categories.
We present the results in Figure \ref{fig:pnn}. The results show that our method outperforms others in most cases.
\begin{table}[t]
  \begin{center}
    \begin{tabular}{l|r|r|r}
      \toprule
      Dataset & MovieLens-1M & Taobao AD & CIKM 2019 \\
      \midrule      
      \#user   & 6040         & 1141729   &1050000    \\
      \#item   & 3706         & 864811    &3934201    \\
      \#instance & 1000209    & 25029435  & 62428486  \\
      \bottomrule
    \end{tabular}
  \end{center}
  \caption{Statistics of datasets used in our experiments.}
  \label{tb:stat}
\end{table}

\begin{table}[t]
  \setlength{\tabcolsep}{3pt}  
  \begin{center}
    \begin{tabular}{c|c| c|c|c|c}
      \toprule
      Dropout & Dataset & Cold & Warm-a & Warm-b & Warm-c \\
     \midrule      
     w  & ML-1M  & \textbf{0.7362} & 0.7971 & 0.8020 & \textbf{0.8059} \\
     w/o  & ML-1M & 0.7360 & 0.7971 & 0.8020 & 0.8057 \\
     \midrule
     w  & TaobaoAD & 0.6097 & 0.6352 & \textbf{0.6387} & \textbf{0.6436} \\
     w/o& TaobaoAD & \textbf{0.6103} & \textbf{0.6361} & 0.6382 & 0.6428 \\
     \midrule
     w  & CIKM     & 0.7425 & \textbf{0.7601} & \textbf{0.7657} & \textbf{0.7706} \\
     w/o& CIKM     & 0.7425 & 0.7541 & 0.7629 & 0.7681 \\
    \bottomrule
    \end{tabular}
  \end{center}
  \caption{An ablation test on the dropout function of diffusion models: "w" indicates that dropout is enabled, whereas "w/o" signifies that dropout is disabled. ML-1M stands for MovieLens-1M. DCN is used as the backbone model.}
  \label{tb:drop-dcn}
\end{table}

\textbf{Ablation Study.}
We also conduct an ablation analysis on the diffusion model concerning the dropout mechanism across three datasets using DCN as the backbone model. The results are shown in Table \ref{tb:drop-dcn}.
We can observe similar results:
on the MovieLens-1M dataset, the inclusion of dropout provides minimal improvement. In contrast, on the CIKM 2019 dataset, incorporating dropout into the diffusion process positively impacts and enhances the model's performance.

\section{Proofs}
\subsection{Definition of $q_\sigma(\mathbf{z}_{1:T}|\mathbf{z_0}, \mathbf{h})$}
We rewrite the definition of $q_\sigma(\mathbf{z}_{1:T}|\mathbf{z_0}, \mathbf{h})$ in the paper:
\begin{align}
  q_\sigma(\mathbf{z}_{1:T}|\mathbf{z}_0, \mathbf{h}) :=
  q_\sigma(\mathbf{z}_{T} | \mathbf{z}_0, \mathbf{h})
  \prod_{t=2}^{T} q_\sigma\left(\mathbf{z}_{t-1}|\mathbf{z}_t, \mathbf{h}, \mathbf{z}_0\right)
  \label{eq:def-dist}
\end{align}

\subsection{Definition of $q_\sigma(\mathbf{z}_{t-1}|\mathbf{z}_t,\mathbf{z}_0, \mathbf{h})$}
As shown in our main paper, we have defined: 
\begin{align}
  q_\sigma(\mathbf{z}_{t-1}|\mathbf{z}_t,\mathbf{z}_0, \mathbf{h}) =
  \mathcal{N}(\mathbf{z}_{t-1}|\kappa_t \mathbf{z}_t + \lambda_t \mathbf{z}_0 + \nu_t \mathbf{h}, \sigma_t^2 \mathbf{I})
  \label{eq:def-post-mean}
\end{align}
where, 
\begin{align}
  &\kappa_t = \sqrt{\frac{1-\alpha_{t-1} - \sigma_t^2}{1-\alpha_t}} \notag \\
  &\lambda_t = \sqrt{\alpha_{t-1}} - \sqrt{\alpha_t} \sqrt{\frac{1-\alpha_{t-1} - \sigma_t^2}{1-\alpha_t}} \notag \\
  &\nu_t = \sqrt{c_{t-1}} - \sqrt{c_t} \sqrt{\frac{1-\alpha_{t-1} - \sigma_t^2}{1-\alpha_t}} \notag
\end{align}

\begin{lemma}
Given the definitions of \( q_\sigma(\mathbf{z}_{1:T} | \mathbf{z}_0, \mathbf{h}) \) in Equation (\ref{eq:def-dist}) and \( q_\sigma(\mathbf{z}_{t-1} | \mathbf{z}_t, \mathbf{z}_0, \mathbf{h}) \) in Equation (\ref{eq:def-post-mean}), we have:
  \begin{equation}
      q_\sigma(\mathbf{z}_{t}| \mathbf{z}_0, \mathbf{h}) =
  \mathcal{N}(\sqrt{\alpha_{t}}\mathbf{z}_0 + \sqrt{c_{t}}\mathbf{h}, (1-\alpha_{t})\mathbf{I})
  \label{eq:forward-step}
  \end{equation}
\end{lemma}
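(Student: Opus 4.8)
The plan is to prove Equation (\ref{eq:forward-step}) by induction on $t$, running downward from $t=T$ to $t=1$, exactly in the spirit of the analogous DDIM lemma. The base case $t=T$ is immediate: the definition of $q_\sigma(\mathbf{z}_T|\mathbf{z}_0,\mathbf{h})$ in Equation (\ref{eq:z_T}) is precisely the claimed form with $t=T$. For the inductive step, I would assume that
\[
q_\sigma(\mathbf{z}_t|\mathbf{z}_0,\mathbf{h}) = \mathcal{N}\bigl(\sqrt{\alpha_t}\,\mathbf{z}_0 + \sqrt{c_t}\,\mathbf{h},\ (1-\alpha_t)\mathbf{I}\bigr)
\]
holds and show the same statement with $t$ replaced by $t-1$. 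Since by Equation (\ref{eq:def-dist}) we have the marginalization relation
\[
q_\sigma(\mathbf{z}_{t-1}|\mathbf{z}_0,\mathbf{h}) = \int q_\sigma(\mathbf{z}_{t-1}|\mathbf{z}_t,\mathbf{z}_0,\mathbf{h})\, q_\sigma(\mathbf{z}_t|\mathbf{z}_0,\mathbf{h})\, d\mathbf{z}_t,
\]
and both factors on the right are Gaussian in $\mathbf{z}_{t-1}$ and $\mathbf{z}_t$ (with $\mathbf{z}_0$, $\mathbf{h}$ as fixed parameters), the integral is again Gaussian. I would invoke the standard fact (e.g., Bishop, \emph{Pattern Recognition and Machine Learning}, Eqs.~(2.113)--(2.115)) that if $p(\mathbf{z}_t) = \mathcal{N}(\mathbf{z}_t; \mathbf{m},\mathbf{S})$ and $p(\mathbf{z}_{t-1}|\mathbf{z}_t) = \mathcal{N}(\mathbf{z}_{t-1}; \mathbf{A}\mathbf{z}_t + \mathbf{b},\mathbf{L})$, then the marginal is $\mathcal{N}(\mathbf{z}_{t-1}; \mathbf{A}\mathbf{m}+\mathbf{b},\ \mathbf{L} + \mathbf{A}\mathbf{S}\mathbf{A}^\top)$.

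Applying this with $\mathbf{A} = \kappa_t \mathbf{I}$, $\mathbf{b} = \lambda_t\mathbf{z}_0 + \nu_t\mathbf{h}$, $\mathbf{L} = \sigma_t^2\mathbf{I}$, $\mathbf{m} = \sqrt{\alpha_t}\mathbf{z}_0 + \sqrt{c_t}\mathbf{h}$, and $\mathbf{S} = (1-\alpha_t)\mathbf{I}$, the resulting mean is
\[
\kappa_t\bigl(\sqrt{\alpha_t}\mathbf{z}_0 + \sqrt{c_t}\mathbf{h}\bigr) + \lambda_t\mathbf{z}_0 + \nu_t\mathbf{h} = \bigl(\kappa_t\sqrt{\alpha_t} + \lambda_t\bigr)\mathbf{z}_0 + \bigl(\kappa_t\sqrt{c_t} + \nu_t\bigr)\mathbf{h},
\]
and the resulting covariance is $\bigl(\sigma_t^2 + \kappa_t^2(1-\alpha_t)\bigr)\mathbf{I}$. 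The proof then reduces to three elementary algebraic identities: (i) $\kappa_t\sqrt{\alpha_t} + \lambda_t = \sqrt{\alpha_{t-1}}$, which follows by substituting the definition of $\lambda_t$ and cancelling; (ii) $\kappa_t\sqrt{c_t} + \nu_t = \sqrt{c_{t-1}}$, analogously from the definition of $\nu_t$; and (iii) $\sigma_t^2 + \kappa_t^2(1-\alpha_t) = \sigma_t^2 + \frac{1-\alpha_{t-1}-\sigma_t^2}{1-\alpha_t}(1-\alpha_t) = 1-\alpha_{t-1}$. Each of these is designed into the choice of $\kappa_t,\lambda_t,\nu_t$, so the verification is routine substitution; I would carry them out explicitly but briefly.

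The only genuine subtlety — and the step I would flag as the main obstacle — is justifying that the family of conditionals chosen actually permits marginalizing $\mathbf{z}_t$ out of the joint in a way that is consistent with the given forward marginal; that is, confirming that the construction in Equation (\ref{eq:def-dist}) is self-consistent so that the inductive hypothesis can legitimately be applied at step $t$. This is exactly why the downward induction (rather than forward) is the right structure: the joint is specified top-down from $\mathbf{z}_T$, so $q_\sigma(\mathbf{z}_t|\mathbf{z}_0,\mathbf{h})$ is well-defined and obtained by marginalizing $\mathbf{z}_{t+1:T}$, while the single-step conditional $q_\sigma(\mathbf{z}_{t-1}|\mathbf{z}_t,\mathbf{z}_0,\mathbf{h})$ is given directly by Equation (\ref{eq:def-post-mean}). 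I would also note one bookkeeping point worth stating: the main-text parameter formulas write $\sigma^2$ where the appendix writes $\sigma_t^2$; I would use $\sigma_t^2$ throughout for consistency, as in Equation (\ref{eq:def-post-mean}). With these points addressed, the induction closes and Equation (\ref{eq:forward-step}) holds for all $t \in \{1,\dots,T\}$.
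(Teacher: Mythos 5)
Your proposal is correct and follows essentially the same route as the paper's own proof: downward induction starting from the base case $t=T$, marginalizing $\mathbf{z}_t$ via the standard Gaussian marginalization result (Bishop, Eq.~2.115), and verifying that the designed choices of $\kappa_t,\lambda_t,\nu_t$ yield mean $\sqrt{\alpha_{t-1}}\mathbf{z}_0+\sqrt{c_{t-1}}\mathbf{h}$ and covariance $(1-\alpha_{t-1})\mathbf{I}$. Your side remarks (the top-down consistency of the joint and the $\sigma^2$ vs.\ $\sigma_t^2$ notation) are sensible but do not change the argument, which matches the paper's.
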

\begin{proof}
Following \cite{song2020denoising}, we prove the statement by induction. 
First, for $t = T$, we already have:
  \begin{equation}
      q_\sigma(\mathbf{z}_{t}| \mathbf{z}_0, \mathbf{h}) =
  \mathcal{N}(\sqrt{\alpha_{t}}\mathbf{z}_0 + \sqrt{c_{t}}\mathbf{h}, (1-\alpha_{t})\mathbf{I})
  \end{equation}
  To prove the Equation (\ref{eq:forward-step}) holds for $t < T$, we have
\begin{equation}
  q_\sigma(\mathbf{z}_{t-1} | \mathbf{z}_0, \mathbf{h}) :=
  \int_{\mathbf{z}_t}
  q_\sigma(\mathbf{z}_t | \mathbf{z}_0, \mathbf{h})
  q_\sigma(\mathbf{z}_{t-1} | \mathbf{z}_t, \mathbf{z}_0, \mathbf{h})    
\end{equation}
Since \( q_\sigma(\mathbf{z}_{t} | \mathbf{z}_0, \mathbf{h}) \) and \( q_\sigma(\mathbf{z}_{t-1} | \mathbf{z}_t, \mathbf{z}_0, \mathbf{h}) \) are both Gaussian, from \cite{10.5555/1162264} (Equation 2.115), we know that \( q_\sigma(\mathbf{z}_{t-1} | \mathbf{z}_0, \mathbf{h}) \) is also Gaussian. We denote it as \( \mathcal{N}(\mu_{t-1}, \Sigma_{t-1}) \).
where
\begin{align}
  \mu_{t-1} &= \kappa_t (\sqrt{\alpha_t} \mathbf{z}_0 + \sqrt{c_t}\mathbf{h}) + \lambda_t \mathbf{z}_0 + \nu_t \mathbf{h} \notag \\
  &=\sqrt{\alpha_{t-1}} \mathbf{z}_0  + \sqrt{c_{t-1}}\mathbf{h}
\end{align}

\begin{align}
  \Sigma_{t-1} &= \sigma_t^2 \mathbf{I} + \frac{1-\alpha_{t-1} - \sigma_t^2}{1-\alpha_t} (1-\alpha_{t})\mathbf{I} \\
  &= (1-\alpha_{t-1})\mathbf{I} 
\end{align}
Therefore,
\begin{equation}
      q_\sigma(\mathbf{z}_{t-1}| \mathbf{z}_0, \mathbf{h}) =
      \mathcal{N}(\sqrt{\alpha_{t-1}}\mathbf{z}_0 + \sqrt{c_{t-1}}\mathbf{h}, (1-\alpha_{t-1})\mathbf{I})
\end{equation}
By applying induction, we establish that Equation (\ref{eq:forward-step}) holds for \( t \leq T \).
\end{proof}

\end{document}